\title{Distributional Off-Policy Evaluation for Slate Recommendations}
\author{
    Shreyas Chaudhari\textsuperscript{\rm 1}\thanks{Correspondence to: schaudhari@cs.umass.edu},
    David Arbour\textsuperscript{\rm 2},
    Georgios Theocharous\textsuperscript{\rm 2},
    Nikos Vlassis\textsuperscript{\rm 2}
}
\newtheorem{theorem}{Theorem}
\newtheorem{ass}[theorem]{Assumption}
\newcommand{\E}{\mathbb{E}}
\newcommand{\I}{\mathbf{1}}
\newcommand{\Var}{\mathrm{Var}}
\begin{document}

\maketitle

\begin{abstract}

Recommendation strategies are typically evaluated by using previously logged data, employing off-policy evaluation methods to estimate their expected performance.
However, for strategies that present users with slates of multiple items, the resulting combinatorial action space renders many of these methods impractical. %
Prior work has developed estimators that leverage the structure in slates to estimate the expected off-policy performance, but the estimation of the entire performance \textit{distribution} remains elusive.
Estimating the complete distribution allows for a more comprehensive evaluation of recommendation strategies, particularly along the axes of risk and fairness that employ metrics computable from the distribution. In this paper, we propose an estimator for the complete off-policy performance distribution for slates and establish conditions under which the estimator is unbiased and consistent.
This builds upon prior work on off-policy evaluation for slates and off-policy distribution estimation in reinforcement learning. %
We validate the efficacy of our method empirically on synthetic data as well as on a slate recommendation simulator constructed from real-world data (MovieLens-20M).
Our results show a significant reduction in estimation variance and improved sample efficiency over prior work across a
range of slate structures. %

\end{abstract}

\section{Introduction}
\label{sec:intro}

Recommendation services are ubiquitous throughout industry
\cite{bobadilla2013recommender,lu2015recommender}.
A common variant of recommendation consists of suggesting multiple items to a user simultaneously, often termed recommendation \textit{slates}, where each position, (a.k.a., a  \textit{slot}), can take multiple possible values \cite{sarwar2000analysis}. For example,
webpage layouts of news or streaming services have separate slots for each category of content where each slot can display any of the items from the category for that slot.
The items are suggested to the user based on a recommendation strategy, called a \textit{policy} and the user response is encoded into a \textit{reward} \cite{li2010contextual}.
A crucial problem for selection and improvement of recommendation strategies is to evaluate the efficacy of a slate policy by estimating the expected reward of that policy. 

One of the simplest and most effective approaches to policy evaluation, often employed in industrial settings, is A/B testing~ \cite{gomez2015netflix,kohavi2017online,feitelson2013development}. This involves randomly assigning users to receive the item recommended by one of two candidate policies, and the relative performance of each policy is directly measured.
However, A/B testing involves deploying the new policy online, which may be infeasible in many settings due to practical or ethical considerations.
As a result,
it is often necessary to employ offline \emph{off-policy evaluation}, in which interaction data collected (offline) from previously deployed policies (off-policy) is used to estimate statistics of the expected performance, risk and other metrics of interest for a new target policy, without actually deploying it online.
This ensures that policies with undesirable outcomes are not deployed online.

A large amount of literature addresses the problem of off-policy evaluation in the non-slate setting~\cite{dudik2014doubly,wang2017optimal,thomas2015high}, where the majority of methods rely on some version of importance sampling \cite{horvitz1952generalization}.
Applied to the slate setting, these methods result in very large importance weights that result in high variance estimates due to the combinatorially large action space on which slate policies operate.
Recent work addresses this deficiency, by introducing methods that leverage the structure in slate actions and rewards to address the high variance.
\citet{swaminathan2017off,vlassis2021control} leverage reward additivity across slot actions to propose estimators for estimation of the expected reward of target slate policies.
\citet{mcinerney2020counterfactual} propose a Markov structure to the slate rewards for sequential interactions.

All the aforementioned methods provide solutions for estimating the \textit{expected reward} for a target policy.
However, in many scenarios where recommendation systems are used, such as those with large financial stakes and in healthcare applications, practitioners are concerned with evaluation metrics such as the behavior of the policy at extreme quantiles and the expected performance at a given risk tolerance~(CVaR).
These quantities need the full \textit{reward distribution} for estimation, which renders prior work which estimates the expected reward inapplicable.
A notable exception is \citet{chandak2021universal} who provide a method for off-policy estimation of the target policy's cumulative reward distribution using ideas similar to importance sampling, allowing for the computation of various metrics of interest, but the estimator is intractable in the slate setting. %

In this work, we propose \underline{s}late \underline{un}iversal \underline{o}ff-policy evaluation~(SUnO), a method that allows for off-policy estimation of target reward distribution (Theorem \ref{thm:main_result}) for slate recommendation policies.
SUnO applies the core ideas from the universal off-policy evaluation (UnO) method~\cite{chandak2021universal} to the slate setting, leveraging an additive decomposition of the conditional reward distribution. This makes it possible to perform off-policy estimation in structured high dimensional action spaces without incurring prohibitively high estimation variance.
We highlight how the estimator can readily be adapted to other generalized decompositions of reward while continuing to be unbiased.
Finally, we provide an empirical evaluation comparing against UnO, where the proposed estimator shows
significant variance reduction, improved sample efficiency, and robust performance even when the conditions for unbiasedness of the estimator are not met.

The main contributions of our work are:
\begin{itemize}
    \item We propose an unbiased estimator for the off-policy reward distribution for slate recommendations under an additively decomposable reward distribution, generalizing prior results for slate off-policy evaluation to the distributional setting.
    \item We theoretically demonstrate how the estimator readily generalizes to slate rewards that do not decompose additively over slots. %
    \item We empirically demonstrate the efficacy of the proposed estimator on slate simulators using synthetic as well as real world data, on a range of slate reward structures.
\end{itemize}

\section{Background and Notation}
\label{sec:notation}

We first formulate the slate recommendation system as a contextual bandit with a combinatorial action space. Each slate action
has $K$ dimensions where each dimension is a slot-level action.
The user-bandit interaction results in a random tuple $(X, A, R)$ at each step, where $X \sim d_X(\cdot)$ is the user context,
$A$ is the slate action generated by the recommendation strategy where $A = [A^k]_{k=1}^K$ is composed of $K$ slot-level actions, and $R \sim d_R(\cdot \mid A, X)$ is the scalar slate-level reward. Since the rewards are observed only at the slate level and not at a per-slot level, we use reward and slate reward interchangeably. {Each} slot-level action can take upto $N$ candidate values, leading to a combinatorially large action space of the order ${N \choose K}$.

A \textit{logging} policy
$\mu(A \mid X) = \Pr(A \mid X)$ recommends slate actions conditioned on user context $X$ is deployed online to collect a dataset for offline evaluation. The offline dataset consists of $n$ i.i.d. samples $D_n = \{(X_i, A_i, R_i)\}_{i=1}^n$, generated by the user-bandit interaction. We focus on the case where $\mu$ is a factored policy, that is,
\[ \mu(A \mid X) = \prod_{k=1}^K \mu_k \big(A^k \mid X \big)\]
where $K$ is the number of slots. Data collection with factored uniform logging policies is standard in practice \cite{swaminathan2017off}.
\textit{Off-policy evaluation} is the task of utilizing data $D_n$ logged using a policy $\mu$, to evaluate a target policy $\pi$ by computing evaluation metrics from the target reward under $\pi$. Standard methods focus on the estimation of the expected reward under the target policy.
In this work, our focus is on the estimation of quantities that go beyond just the expected target reward by estimating the reward \textit{distribution}.

Throughout the paper, the sample estimates of any quantity $y$ are denoted by $\hat{y}_n$ where the subscript indicates the number $(n)$ of data points used for estimation.
For instance, the cumulative reward distribution observed for a policy $\pi$ is denoted by $F^\pi(\nu)$. The sample estimate of the distribution will be denoted by $\hat{F}^{\pi}_n(\nu)$.

\subsection{Related Work on Off-Policy Evaluation}
\label{sec:ope_bg}
Importance sampling (IS) \cite{horvitz1952generalization,sutton2018reinforcement}, also known as inverse propensity scoring (IPS) \cite{dudik2014doubly}, provides a technique for unbiased estimation of expected target reward. However, it suffers from high variance in large action spaces.
There are numerous extensions to IS for variance reduction \cite{dudik2011doubly,thomas2015safe,kallus2019intrinsically}.
The IS estimator and all methods derived from it rely on a standard common-support assumption. We use a weaker form of this assumption that requires support at the slot level instead of the entire slate.
\begin{ass}[Common Support]
    The set $D_n$ contains i.i.d. tuples generated using $\mu$, such that for some (unknown) $\varepsilon>0, \mu_k(A^k \mid X)<\varepsilon \Longrightarrow$ $\pi(A^k \mid X)=0;$ $\forall ~k, X, A$.
    \label{ass:support}
\end{ass}
Applying the methods to slates, \cite{swaminathan2017off} assume additivity of slate rewards as a way to reduce the estimation variance of the expected reward of the target policy.
Further variance reduction is obtained by using control variates \cite{vlassis2021control}.
\citet{mcinerney2020counterfactual} assume a Markov structure to the slate rewards during sequential item interaction and propose an estimator for the expected target reward. We refer the reader to these papers for additional references on slate recommendations.

These off-policy estimators, along with most others in the literature, provide estimates of the \textit{expected} reward of the target policy \cite{li2018offline}. However, the expected value is usually not sufficient for comprehensive off-policy analysis, particularly in the case of risk assessment that is crucial for recommendation systems \cite{shani2011evaluating}. Additional metrics of interest, often those computable from the whole \textit{reward distribution} are necessary in practice \cite{keramati2020being,altschuler2019best}.
For example, metrics like value at risk are used risk analysis of a new recommendation strategy.
To that end, work on universal off-policy estimation (UnO) \cite{chandak2021universal} uses ideas motivated by importance sampling to estimate the whole cumulative distribution of the reward under the target policy.
However, in combinatorially large action spaces, as with the slate problems we consider here, the UnO estimator can incur prohibitive variance.
Our proposed estimator utilizes possible structure in slate rewards to circumvent this issue.

\subsection{Structure in Slate Rewards}

The combinatorial action space of slates becomes a key challenge for most general methods like importance sampling (IS) for off-policy evaluation \cite{wang2017optimal}.
The generality of the approach results in them not fully leveraging the structure present in slate rewards \cite{sunehag2015deep} and thus general IS-based approaches frequently suffer from high variance.
Prior work leverages user behavior patterns while interacting with slates, which are encoded into structured rewards (e.g., time spent, items clicked, etc.).
Some examples of structure in slate rewards are Markov structure for observed slot-level rewards \cite{mcinerney2020counterfactual}, the dependence of slate reward only on the selected slot \cite{ie2019slateq}, and unobserved slot-level rewards with an additively decomposable slate reward \cite{swaminathan2017off,vlassis2021control}.

The last one is of particular interest, where the \textit{additivity of expected reward} \cite{cesa2012combinatorial} posits that the conditional mean slate-level reward decomposes additively as the sum of (arbitrary) slot-level latent functions, i.e., $\E[R \mid A, X] = \sum_{k=1}^K \phi_k(A_k, X)$. This has been leveraged to obtain a significant reduction in estimation variance for off-policy evaluation \cite{swaminathan2017off,vlassis2021control}.
This decomposition captures the individual effects of each slot. It may readily be generalized to capture non-additive joint effects of more than one slot action, for example, to capture the effects of \textit{pairs} of slot-actions, one may consider the decomposition:
\begin{equation}
    \E[R \mid A, X] = \sum_{k=1}^{K} \sum_{j=k}^{K} \phi_{jk}(A_k, A_j, X) \, .
    \label{eq:general_rew}
\end{equation}
It may further be generalized to capture the combined effects of $m$-slots.
Note that in the most general case, for $m=K$, the reward does not permit any decomposition over slots.

Analogous to the above structural conditions, we posit a condition that allows us to perform \textit{consistent and unbiased} estimation of the target off-policy \textit{distribution}.
\begin{ass}[Additive CDF]
    There exists an additive decomposition of the conditional cumulative density function (CDF) of the slate reward as the sum of (arbitrary) slot-level latent functions:
    \[F_R(\nu) = \sum_{k=1}^{K} \psi_{k}(A_{k}, X, \nu) ,~\forall \nu \]
    where $F_R(\nu) := \Pr(R \leq \nu \mid A, X) $ .
    \label{ass:additive_cdf}
\end{ass}
The slot-level rewards, if any, are unobserved. The condition just assumes that an additive decomposition \textit{exists} and does not require knowledge of the constituent slot-level functions.
We demonstrate empirically
that this condition is often a close approximation for real-world data and that our estimator performs better than more general methods even when this condition happens to be an inexact approximation, i.e., when a perfect decomposition does not exist.

This decomposition may also be generalized to capture the combined effects of $m$-slots.
In line with prior work \cite{wen2015efficient,kale2010non,kveton2015tight,swaminathan2017off,vlassis2021control,ie2019slateq}, we focus our
analysis for the case $m=1$, which proves to be effective in practice as corroborated by empirical analysis.
We additionally provide derivations for how the estimator can readily generalize to cases where $m>1$, along with a theoretical analysis of its properties.

It is worth noting that an additively decomposable reward CDF always implies an additive expected reward %
by definition
and the former often serves to be a close approximation when the latter holds.
This is helpful since a commonly used metric for evaluating the performance of slates, the normalized discounted cumulative gain (nDCG) \cite{burges2005learning}, is an additively decomposable metric, and it has been used in the past for defining the slate reward \cite{jarvelin2017ir,swaminathan2017off}.

\section{Slate Universal Off-Policy Evaluation}
\label{sec:suno}

We will now turn to off-policy evaluation in slates as an off-policy reward distribution estimation task.
The core idea builds upon the framework of \citet{chandak2021universal} who use importance weights $\rho$
in an estimator of the reward CDF of a target policy from logged data $D_n \sim \mu$.
In the case of slates, the the importance weight $\rho$ comprises of probability ratios over all slot actions.
The most direct approach for defining $\rho$ in the case of a factored logging policy $\mu$ is to consider a formulation analogous to importance sampling by taking the \textbf{product} of the slot-level probabilities (Equation \eqref{eq:iw_G}).
This approach will be plagued by high variance when the size of the slate $K$ is large.
To remedy this, our proposed algorithm SUnO utilizes the structure in slates provided by Assumption~\ref{ass:additive_cdf} wherein the CDF of the slate level reward admits an additive decomposition. In place of $\rho$, we define an importance weight $G$ (Equation \eqref{eq:iw_G}) that is a \textbf{sum} of slot-density ratios.
\begin{equation}
    \begin{split}
        &\rho = \frac{\pi(A|X)}{\mu(A|X)} = \prod_{k=1}^K \frac{\pi\big(A^k \mid X\big)}{\mu_k(A^k \mid X)} \\
        &G = \sum_{k=1}^K \left( \frac{\pi\big(A^k \mid X\big)}{\mu_k\big(A^k \mid X\big)} -1 \right) + 1 \,
    \end{split}
    \label{eq:iw_G}
\end{equation}

\begin{algorithm*}[tb]
    \caption{SUnO$(\nu)$}
    \begin{flushleft}
        \textbf{Input}: $\pi$, $\mu$, $\nu$, $\{(X_i, A_i, R_i)\}_{i=1}^n \sim D_n$\\
        \textbf{Output}: $\hat{F}_n^\pi(\nu)$
    \end{flushleft}
    \begin{algorithmic}[1] %
        \STATE $s_\nu = 0$ \hfill \COMMENT{Initialize counter and iterate over the logged dataset}
        \FOR{$i = 1,2, \dots, n$}
        \STATE $G_i \gets 1 - K + \sum_{k=1}^K \frac{\pi\left(A_i^k \mid X_i\right)}{\mu_k\left(A_i^k \mid X_i\right)} $ \hfill \COMMENT{Compute the importance weight (Equation \eqref{eq:iw_G})}
        \STATE $s_\nu \gets s_\nu + \I\{R_i \leq \nu\}G_i$ \hfill \COMMENT{Add to counter if reward is less than $\nu$}
        \ENDFOR
        \STATE \textbf{return} $\hat{F}_n^\pi(\nu) = s_\nu/n$ \hfill \COMMENT{Normalize and return counter}
    \end{algorithmic}
    \label{alg:suno}
\end{algorithm*}
The estimator for the target distribution counts the number of samples for which the reward is less than a threshold $\nu$ and reweighs that count with importance weights to be reflective of the counts under the target policy $\pi$. In expectation, the counts reflect the probabilities of obtaining a reward less than $\nu$, providing the value of the reward CDF at $\nu$, denoted by $F^\pi(\nu)$.
Use of the importance weight $G$ in place of $\rho$
results in significantly lower variance in estimation and improved effective sample size, while keeping the estimator unbiased.
It is easy to confirm that under a factored logging policy, $\E_\mu[G] = 1$.
Below we prove that this importance weight
allows for a change of distribution of the expected slate reward
and can be used for estimating the target reward CDF. Our main result is the following:

\begin{restatable}{theorem}{mainresult}
    Let $R$ be a real-valued random variable that denotes the slate reward and admits an additive
    decomposition of its conditional cumulative distribution $F_R(\nu)$ (Assumption \ref{ass:additive_cdf}).
    Under a factored $\mu$ and Assumption \ref{ass:additive_cdf}, $$F^\pi(\nu) = \E_\mu [G \cdot \I\{R \leq \nu \}], ~\forall ~\nu$$
    \label{thm:main_result}
\end{restatable}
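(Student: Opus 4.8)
The plan is to evaluate the right-hand side $\E_\mu[G \cdot \I\{R \leq \nu\}]$ directly by repeated conditioning, using the factored structure of $\mu$ to isolate the contribution of each slot and Assumption~\ref{ass:additive_cdf} to pair the per-slot terms of $G$ with the per-slot terms of the conditional reward CDF. First I would condition on $(X,A)$ and apply the tower rule: since $G$ is a function of $(X,A)$ alone and $R \sim d_R(\cdot \mid A, X)$ regardless of which policy produced $A$, we get $\E_\mu[G \cdot \I\{R\le\nu\}] = \E_\mu\big[G \cdot \Pr(R \le \nu \mid A, X)\big] = \E_\mu[G \cdot F_R(\nu)]$, which by Assumption~\ref{ass:additive_cdf} equals $\E_\mu\big[G \cdot \sum_{k=1}^K \psi_k(A_k, X, \nu)\big]$. (Assumption~\ref{ass:support} is what makes the slot-wise ratios in $G$ well defined, with the convention $0/0 := 0$.)

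Next I would write $G = \sum_{k=1}^K w_k - (K-1)$ with $w_k := \pi(A^k \mid X)/\mu_k(A^k \mid X)$, condition further on $X$, and use that under a factored $\mu$ the slot actions $A^1,\dots,A^K$ are mutually independent given $X$. For each fixed $j$ I would then expand $\E_{A\sim\mu(\cdot\mid X)}\big[\big(\textstyle\sum_k w_k - (K-1)\big)\,\psi_j(A_j,X,\nu) \,\big|\, X\big]$ term by term. For $k \neq j$, independence together with $\E_\mu[w_k \mid X] = 1$ gives $\E_\mu[w_k \psi_j \mid X] = \E_\mu[\psi_j \mid X]$. For $k = j$, the ratio performs a change of measure at slot $j$, so $\E_\mu[w_j \psi_j \mid X] = \E_{A^j \sim \pi(\cdot\mid X)}[\psi_j(A^j, X, \nu) \mid X]$. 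The $K-1$ copies of $\E_\mu[\psi_j \mid X]$ produced by the $k\neq j$ terms cancel exactly against the $-(K-1)\E_\mu[\psi_j \mid X]$ term, leaving $\E_\mu[G \cdot \psi_j \mid X] = \E_{A \sim \pi(\cdot\mid X)}[\psi_j(A_j, X, \nu) \mid X]$.

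Summing over $j$ and reassembling the decomposition, $\E_\mu[G \cdot F_R(\nu) \mid X] = \E_{A \sim \pi(\cdot\mid X)}\big[\sum_{j=1}^K \psi_j(A_j, X, \nu)\big] = \E_{A \sim \pi(\cdot\mid X)}[\Pr(R \le \nu \mid A, X) \mid X]$; taking the outer expectation over $X \sim d_X$ and folding $R$ back in yields $\E_{X}\E_{A\sim\pi}\E_{R\sim d_R}[\I\{R\le\nu\}] = F^\pi(\nu)$, as claimed.

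The crux, and the step I expect to need the most care, is the cancellation of the off-diagonal $(k,j)$ contributions: it is precisely the combination of (i) the additive, single-slot ($m=1$) structure of the CDF and (ii) the fact that $G-1$ is a sum of slot-ratios that are \emph{centered} under a factored $\mu$ (i.e.\ $\E_\mu[w_k \mid X]=1$) that makes every cross term collapse to the same $\E_\mu[\psi_j\mid X]$ and then disappear. I would also remark that this same bookkeeping immediately gives $\E_\mu[G]=1$, and that if instead the CDF decomposed over $m$-tuples of slots, replacing $G$ by the analogous sum over $m$-subsets of centered ratios would make the identical cancellation argument go through, preserving unbiasedness.
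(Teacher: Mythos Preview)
Your proposal is correct and follows essentially the same approach as the paper: condition on $(X,A)$ to replace the indicator by $F_R(\nu)=\sum_k\psi_k$, expand the product $G\cdot\sum_k\psi_k$, use the factored structure of $\mu$ so that the diagonal terms $w_k\psi_k$ implement a per-slot change of measure while the off-diagonal terms (each with $\E_\mu[w_k\mid X]=1$) cancel against the $-(K-1)$ offset, then take the outer expectation over $X$. Your write-up is in fact a bit more explicit than the paper's about why the cross terms vanish, but the argument is the same.
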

Thus, a weighted expectation of the indicator function, with weights given by $G$, gives the target CDF. Based on this result, we propose the following sample estimator for $F^\pi(\nu)$ that uses data $D_n \sim \mu$,
\[\hat{F}^\pi_n(\nu) := \frac{1}{n} \sum_{i=1}^n G_i \, \I\{R_i \leq \nu\}, ~\forall ~\nu \]

This estimator, called \textit{slate universal off-policy estimator} (SUnO) is outlined in Algorithm \ref{alg:suno}. In the following result, we establish that SUnO leverages the additive structure in Assumption \ref{ass:additive_cdf} to obtain an unbiased and pointwise consistent estimate of the CDF of the target policy.
The proofs of both results may be found in the Appendix.
\begin{restatable}{theorem}{unbiasedestimator}
    Under Assumption \ref{ass:additive_cdf}, $\hat{F}^\pi_n(\nu)$ is an unbiased and pointwise consistent estimator of $F^\pi(\nu)$.
    \label{thm:unbiased_consistent}
\end{restatable}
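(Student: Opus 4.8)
The plan is to obtain both claims essentially for free from Theorem~\ref{thm:main_result}, which already identifies the target CDF with the population quantity $\E_\mu[G\cdot\I\{R\le\nu\}]$; the remaining work is to pass from this population identity to the empirical average $\hat F^\pi_n(\nu)$ using that $D_n$ is an i.i.d.\ sample from $\mu$.

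\textit{Unbiasedness.} Fix $\nu$. Since the tuples $(X_i,A_i,R_i)$ are i.i.d.\ draws from the factored logging process, the summands $G_i\,\I\{R_i\le\nu\}$ are i.i.d.\ copies of $G\,\I\{R\le\nu\}$. By linearity of expectation,
\[
\E_\mu\!\big[\hat F^\pi_n(\nu)\big]=\frac{1}{n}\sum_{i=1}^{n}\E_\mu\!\big[G_i\,\I\{R_i\le\nu\}\big]=\E_\mu\!\big[G\,\I\{R\le\nu\}\big]=F^\pi(\nu),
\]
where the last equality is exactly Theorem~\ref{thm:main_result}. As this holds for every $\nu$, the estimator is unbiased.

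\textit{Pointwise consistency.} Again fix $\nu$ and set $Y_i:=G_i\,\I\{R_i\le\nu\}$, so that $\hat F^\pi_n(\nu)=\frac1n\sum_{i=1}^n Y_i$ is the empirical mean of i.i.d.\ variables with common mean $F^\pi(\nu)$. It suffices to verify integrability; in fact the $Y_i$ are bounded. Under the common-support Assumption~\ref{ass:support}, each slot density ratio in Equation~\eqref{eq:iw_G} is either zero or satisfies $\pi(A^k\mid X)/\mu_k(A^k\mid X)\le 1/\varepsilon$, hence $\sum_{k}\pi(A^k\mid X)/\mu_k(A^k\mid X)\in[0,K/\varepsilon]$ and $G$ lies in a bounded interval, so $|Y_i|\le C$ for a finite constant $C=C(K,\varepsilon)$ almost surely. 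Applying the strong law of large numbers gives $\hat F^\pi_n(\nu)\to F^\pi(\nu)$ almost surely as $n\to\infty$ (and boundedness also yields an explicit finite-sample rate via Hoeffding's inequality, if desired), which establishes pointwise consistency.

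The argument is short precisely because Theorem~\ref{thm:main_result} carries the structural content; the only point that needs care is ensuring the weight $G$ is well-defined and bounded, which is where Assumption~\ref{ass:support} enters (it guarantees $\mu_k(A^k\mid X)>0$ on the support of $\pi$ and uniformly lower-bounds it, so no density ratio blows up). I would also state explicitly that the consistency claimed is pointwise in $\nu$; promoting it to uniform consistency over $\nu$ would additionally require a Glivenko--Cantelli / DKW-type argument together with the uniform bound on $G$, which goes beyond what the theorem asserts.
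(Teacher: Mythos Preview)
Your proof is correct and follows essentially the same approach as the paper: unbiasedness by linearity of expectation together with Theorem~\ref{thm:main_result}, and pointwise a.s.\ consistency via the strong law of large numbers after observing that Assumption~\ref{ass:support} bounds each slot-level ratio by $1/\varepsilon$ and hence bounds $G$. The paper invokes Kolmogorov's SLLN through bounded variance rather than bounded support directly, but this is a cosmetic difference.
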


It is important to note that
analogous to \citet{swaminathan2017off}
our estimator does \emph{not} require knowledge of the specific functions ($\psi_k$'s) in the decomposition of the conditional CDF in Assumption \ref{ass:additive_cdf}; it only assumes the \emph{existence} of a set of such latent functions, and a corresponding additive decomposition of the conditional CDF, to attain unbiased estimation.
Even in cases where the assumption is not satisfied, our method (Algorithm \ref{alg:suno}) performs robustly, as we demonstrate in our experiments.

The estimated target CDF can be used to compute metrics of interest as functions of the CDF (for example, mean, variance, VaR, CVaR, etc.).
Some of these metrics are non-linear functions of the CDF (VaR, CVaR) and thus their sample estimates would be biased estimators.
This is to be expected \cite{chandak2021universal}.
Metrics that are linear functions of the CDF however have unbiased sample estimators. Thus, an unbiased target CDF estimator serves to be a ``one-shot'' solution for most metrics of interest, though unbiasedness holds only for certain metrics. We demonstrate the estimation of some of these metrics in our empirical analysis.

\subsection{Properties of SUnO}

\subsubsection{Variance:}
The estimator enjoys significantly low variance for target estimation. The key factor is that the estimator uses importance weights that are a \textit{sum} of slot level density ratios as opposed to a \textit{product} as is used UnO \cite{chandak2019learning}. Particularly in the slate setting, the latter methods suffer from enormous variance and reduced effective sample size, as we demonstrate empirically.

Consider the worst-case variance of the two estimators. From Assumption \ref{ass:support} we have $0 \leq \frac{\pi(A^k|X)}{\mu_k(A^k|X)} \leq \frac{1}{\epsilon}$, which implies
\begin{align*}
    \Var(\text{UnO}) = O\left(\frac{1}{\epsilon^K} \right); \quad \Var(\text{SUnO}) = O\left( \frac{K}{\epsilon} \right)
\end{align*}
Thus the worst-case variance of SUnO grows linearly with an increase in the size of the slate ($K$) while that of UnO grows exponentially.

\subsubsection{Generalization to $m$-slot reward decomposition}

\begin{figure*}[htb]
    \centering
    \includegraphics[width=0.9\textwidth]{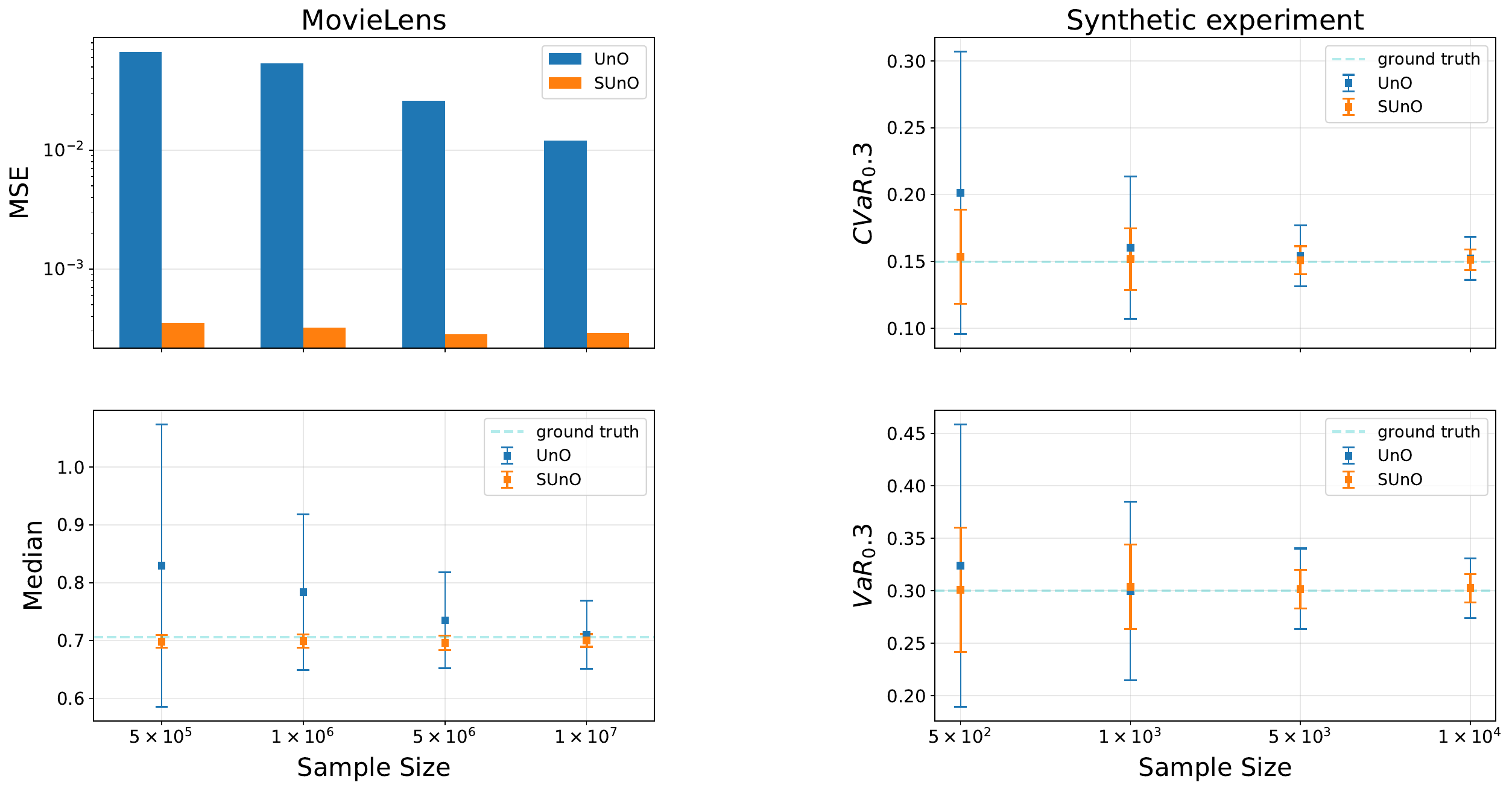}
    \caption{(left) Movielens-20M: (top) MSE for mean computed from the estimated target CDF for increasing sample sizes. SUnO performs significantly better in terms of bias and variance compared to UnO. The same follows for median estimation (bottom) where it demonstrates much better sample efficiency and lower estimation variance as seen by the error bars.
        (right) Synthetic Experiment: Estimates of CVaR$_{0.3}$ (top) and VaR$_{0.3}$ (bottom) computed from the estimated target CDF. In this setting where Assumption \ref{ass:additive_cdf} is satisfied, SUnO performs better than UnO in terms of estimation variance, sample efficiency, and estimation accuracy as expected.}
    \label{fig:synth_and_movielens}
\end{figure*}

As noted earlier, the structural assumptions in slate rewards may be generalized to account for the joint effects of multiple slots. The proposed estimator readily applies to such generalizations.
For instance, consider the case where the conditional reward CDF decomposes into terms composed of $m$ slot-actions.
\begin{equation}
    F_R(\nu) := \sum_{1 \leq k_1 < k_2, \dots < k_m \leq K} \psi_{k_{1:m}}(A^{k_{1:m}}, X, \nu)
    \label{eq:m_way_cdf}
\end{equation}
where $k_{1:m}$ is used as shorthand to denote the indices $k_1, k_2, \dots, k_m$. It can be seen that this decomposition consists of ${K \choose m}$ terms. The importance weight from Equation \ref{eq:iw_G} with a minor change provides an \textit{unbiased} and \textit{consistent} off-policy estimator for this reward structure.
Define
\begin{equation}
    G_m = \sum_{1 \leq k_1 < k_2, \dots < k_m \leq K} \left(\prod_{i=1}^{m}\frac{\pi(A^{k_i}|X)}{\mu_{k_i}(A^{k_i}|X)} - 1\right) + 1
    \label{eq:m_way_G}
\end{equation}
to be the importance weight. With a derivation similar to Theorem \ref{thm:main_result} we show the following result,
\begin{restatable}{cor}{mwaygeneralization}
    Let $R$ be a random variable that denotes the slate reward, and permits a decomposition into (latent) functions of $m$ slots as in Equation \ref{eq:m_way_cdf}. Under a factored $\mu$, we have
    \[F^\pi(\nu) = \E_\mu [G_m \cdot \I\{R \leq \nu \}], ~\forall ~\nu\]
\end{restatable}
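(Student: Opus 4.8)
The plan is to follow the proof of Theorem~\ref{thm:main_result} essentially verbatim, with single slots replaced by $m$-element index sets. Write $T=\{k_1,\dots,k_m\}$ for a generic increasing tuple $1\le k_1<\dots<k_m\le K$, abbreviate $\psi_T(A^T,X,\nu)$ for $\psi_{k_{1:m}}(A^{k_{1:m}},X,\nu)$, and let $w_T:=\prod_{i\in T}\tfrac{\pi(A^i\mid X)}{\mu_i(A^i\mid X)}$ be the associated block density ratio, so that $G_m=1+\sum_T(w_T-1)$ and, by Equation~\eqref{eq:m_way_cdf}, $F_R(\nu)=\sum_T\psi_T(A^T,X,\nu)$. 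First I would condition on $(A,X)$ and use $\E_\mu[\I\{R\le\nu\}\mid A,X]=F_R(\nu)$ to reduce the claim to $\E_\mu[G_m F_R(\nu)]=F^\pi(\nu)$, the outer expectation now being over $X\sim d_X$ and $A\sim\mu(\cdot\mid X)$. Expanding the product and using linearity of expectation,
\[
\E_\mu[G_m F_R(\nu)\mid X]=\sum_T\E_\mu[\psi_T\mid X]+\sum_S\sum_T\E_\mu[(w_S-1)\psi_T\mid X],
\]
where $S$ and $T$ each range over the $\binom{K}{m}$ index sets.

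Next I would evaluate the block terms using that $\mu$ is factored: conditioned on $X$ the slot actions $\{A^k\}$ are independent and each slot-level ratio has unit mean, $\E_\mu\big[\tfrac{\pi(A^i\mid X)}{\mu_i(A^i\mid X)}\,\big|\,X\big]=1$. Hence every factor of $w_S$ indexed outside $T$ integrates to $1$, so $\E_\mu[(w_S-1)\psi_T\mid X]$ depends on $S$ only through the overlap $S\cap T$; and for $S=T$ the surviving change of measure on the slots of $T$ gives $\E_\mu[w_T\psi_T\mid X]=\E_\pi[\psi_T\mid X]$. Feeding the diagonal pieces $\E_\mu[(w_T-1)\psi_T\mid X]=\E_\pi[\psi_T\mid X]-\E_\mu[\psi_T\mid X]$ back into the display, the $\E_\mu[\psi_T\mid X]$ contributions telescope against the leading sum, and one is left with $\sum_T\E_\pi[\psi_T\mid X]=\E_\pi[F_R(\nu)\mid X]=\Pr_\pi(R\le\nu\mid X)$. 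Taking $\E_X$ of both sides then yields $F^\pi(\nu)$, which is the claim.

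The step I expect to be the main obstacle — and the only genuinely new ingredient relative to the $m=1$ argument of Theorem~\ref{thm:main_result} — is the off-diagonal block terms $\E_\mu[(w_S-1)\psi_T\mid X]$ with $S\ne T$. When $m=1$ two distinct singletons are automatically disjoint, $w_S$ integrates out completely, and each such term is exactly zero; that is precisely why the sum-of-ratios weight works there. Once $m\ge 2$, two distinct $m$-subsets can overlap partially, $\emptyset\ne S\cap T\subsetneq T$, and those terms are not individually zero. My plan for this step is to fix $T$, group the subsets $S$ by their trace $U=S\cap T$, use that exactly $\binom{K-m}{\,m-|U|\,}$ subsets realize each trace, and then check whether the weighted combination of the resulting partially reweighted expectations is annihilated by the $1-\binom{K}{m}$ constant together with the per-block $-1$ offsets built into $G_m$ (a Vandermonde-type identity governs the counts). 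Settling this cancellation is the crux — it is exactly what would pin down the form of the weight in Equation~\eqref{eq:m_way_G} — and once it is established the rest of the argument is identical to the proof of Theorem~\ref{thm:main_result}.
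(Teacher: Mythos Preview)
Your approach is essentially the paper's: condition on $(A,X)$, expand $G_m F_R(\nu)$ as a double sum over $m$-subsets, peel off the diagonal $S=T$ terms to produce $\sum_T\E_\pi[\psi_T\mid X]$, and argue that the remainder vanishes. The paper's proof does exactly this and asserts that the non-matching block ``equals $0$ since $\E_\mu[\prod_i Y_{l_i}]=1$,'' without addressing the dependence between $w_S$ and $\psi_T$ when $S\cap T\neq\emptyset$.

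You are right to flag the partial-overlap terms as the crux, but the cancellation you hope for does not occur. Take $K=3$, $m=2$, each slot binary, $\mu$ uniform, $\pi$ deterministically choosing action $1$ in every slot, and $F_R(\nu)=\psi_{12}(A^1,A^2)=A^1A^2$ with $\psi_{13}=\psi_{23}=0$. Then $Y_i=2\,\I\{A^i{=}1\}$, $G_2=Y_1Y_2+Y_1Y_3+Y_2Y_3-2$, and
\[
\E_\mu[G_2\,\psi_{12}]=\Pr_\mu(A^1{=}A^2{=}1)\,\E_\mu\big[G_2\mid A^1{=}A^2{=}1\big]=\tfrac14\,(4+2+2-2)=\tfrac32,
\]
whereas $\E_\pi[\psi_{12}]=1$. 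Equivalently, fixing $T=\{1,2\}$ and writing $\E_{\pi_1\mu_2}$ for the expectation with slot $1$ drawn from $\pi$ and slot $2$ from $\mu$, the off-diagonal residual is $\E_{\pi_1\mu_2}[\psi_{12}]+\E_{\mu_1\pi_2}[\psi_{12}]-2\,\E_\mu[\psi_{12}]=\tfrac12+\tfrac12-\tfrac12=\tfrac12\neq 0$. No combinatorial identity on the counts $\binom{K-m}{\,m-|U|}$ can repair this: the obstruction is a nonzero linear functional of the arbitrary latent $\psi_T$, not a mismatch of scalar coefficients. So the step you singled out genuinely fails; the paper's own proof glosses over exactly this point, and the identity $F^\pi(\nu)=\E_\mu[G_m\,\I\{R\le\nu\}]$ does not hold in general for $m\ge 2$ with the weight $G_m$ as defined in Equation~\eqref{eq:m_way_G}.
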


The proof of the result may be found in the Appendix. The derivation highlights how the result may be extended to other forms of decomposition of the slate reward.
Note that when $m=K$, the importance weight reduces to $G = \prod_{i=1}^K Y_{k+i}$ which is the same as $\rho$ as used in UnO. This is because at $m=K$ the reward permits no decomposition. The proposed estimator may then be interpreted as a generalization of UnO to various reward decompositions, where UnO forms a special case wherein the reward does not permit any decomposition.
We now empirically study the case where $m=1$, in line with prior work. The reward decomposition at $m=1$ proves to be a sufficiently accurate approximation to real-world data as we demonstrate in the empirical section that follows.

\section{Empirical Analysis}
\label{sec:experiments}

\begin{table*}[t]
    \centering
    \parbox{.45\linewidth}{
        \caption*{(a) Synthetic experiment}
        \begin{tabular}{|c|c|c|c|c|}
            \hline
            Sample size & 0.5 $\times 10^3$ & 1 $\times 10^3$ & 5 $\times 10^3$ & 10 $\times 10^3$ \\
            \hline
            SUnO        & \textbf{0.131}    & \textbf{0.102}  & \textbf{0.059}  & \textbf{0.049}   \\
            UnO         & 0.256             & 0.191           & 0.098           & 0.077            \\
            \hline
        \end{tabular}
    }
    \qquad
    \parbox{.45\linewidth}{
        \centering
        \caption*{(b) MovieLens}
        \begin{tabular}{|c|c|c|c|c|}
            \hline
            Sample size & 0.5 $\times 10^6$ & 1 $\times 10^6$ & 5 $\times 10^6$  & 10 $\times 10^6$ \\
            \hline
            SUnO        & \textbf{0.169}    & \textbf{0.173}  & \textbf{ 0.181 } & \textbf{0.184}   \\
            UnO         & 0.441             & 0.410           & 0.31912          & 0.252            \\
            \hline
        \end{tabular}
    }
    \caption{The tables report the average Kolmogorov-Smirnov statistic for the estimated CDFs for (a) Synthetic experiment and
        (b) MovieLens. The results demonstrate that SUnO estimates the target CDF more accurately with better sample efficiency.}
    \label{table:KSS}
\end{table*}
We investigate the following questions in the empirical analysis:
\\\textbf{RQ1}: Does the estimator have low estimation variance and high sample efficiency when the slate rewards have an additive structure?
\\\textbf{RQ2}: Does the method accurately estimate the off-policy distribution, and metrics from it?
\\\textbf{RQ3}: Does the method apply to settings where conditions for unbiasedness of the estimator do not hold?

To that end, we evaluate the performance of SUnO in a range of slate recommendation settings, comparing its performance against UnO \cite{chandak2021universal}, a general estimator that does not make any structural assumptions.
Better performance is defined as having lower estimation error and variance, along with improvement in sample efficiency.

We begin by evaluating the estimators on synthetic data that follows the additive CDF structure to corroborate theoretical results (RQ1).
We then proceed to real-world data experiments and use the additively decomposable metric nDCG \cite{swaminathan2017off} as the slate reward (RQ2, RQ3).
We test our estimator on a publicly available dataset - MovieLens-20M \cite{harper2015movielens} - and on a semi-synthetic slate simulator - Open Bandit Pipeline \cite{saito2020open}.

We develop a procedure to \textit{construct a slate simulator} from ratings datasets like MovieLens.
We evaluate SUnO in settings where
the slate reward does not by construction satisfy either the additive CDF or the additive reward conditions,
and see it demonstrate robust performance.

\noindent\textbf{Implementational note:}
Algorithm \ref{alg:suno} outlines the steps for estimating the target CDF at any reward value $\nu$.
The reward, in general, takes on continuous real values and implementationally it is not practical to estimate the target CDF at all continuous values of $\nu$. In practice, an empirical estimate of the CDF may be computed at discrete points over the range of rewards. In between those points, the value of the CDF is kept constant.
Consequently, we compute the target CDF at evenly spaced points over the range of rewards for both estimators in the experiments that follow. The granularity of this discretization of the domain of the CDF reflects in the granularity of the estimated CDF. To ensure accuracy and relative smoothness in the estimated CDF, we choose a very fine level of discretization relative to the range of reward for each experiment.

\noindent\textbf{Metrics:} We define a few metrics that we utilize in our empirical analysis.
Value at Risk (VaR$_{\alpha}$) \cite{rockafellar2000optimization,wirch2001distortion} denotes the value of the reward such that the probability of observing rewards greater than that value is $1-\alpha$. Correspondingly, Conditional Value at Risk (CVaR$_{\alpha}$) denotes the expected value of the rewards given that that observed reward is less than VaR$_{\alpha}$. These metrics are used for risk assessment of policies and we compute them from the estimated target reward distribution. To evaluate the estimation error in our estimated target distribution, we use the Kolmogorov-Smirnov statistic \cite{stephens1974edf} which measures how well the estimated target distribution matches ground-truth distribution by assessing the largest discrepancy between their cumulative probabilities.

All the experiments have a factored uniform-random logging policy. The error bars denote one standard error. The code is available at: 
\url{https://github.com/shreyasc-13/suno}.

\subsection{Synthetic Experiments}
\label{sec:synth_exp}
We begin by synthetically generating data where the slate reward permits the additive CDF structure (Assumption \ref{ass:additive_cdf}).
\\ \textbf{Setup}:
We consider the non-contextual bandit setting for ease of analysis, and the same may easily be extended to a contextual setting.
To construct the data-generating reward distribution, each $\psi_k$ is set to a monotonic non-decreasing function by assigning slices of a sigmoid function to the corresponding $\psi_k$'s for each $(A^k)$.
This manner of construction ensures that the resultant sum of the functions, the CDF, is again a monotonic non-decreasing function. The $\psi_k$'s are appropriately normalized.
For these experiments, we set the number of slots $K = 3$ and the number of actions in each slot to $N=3$.
The target policy is a deterministic policy that chooses one action per slot, where the action for each slot is assigned randomly at the start of the experiment and held constant for all experiments.
\\
\textbf{Experiment}: We compare the performance of the estimators on two fronts:
\begin{enumerate}[leftmargin=*]
    \item \textit{Goodness-of-fit of CDF}: We report the average Kolmogorov-Smirnov statistic of the estimated target CDFs. The ground truth target CDF is computed by executing the target policy on the simulator.
    \item \textit{Tail measures}: We compute the CVaR$_{0.3}$ and VaR$_{0.3}$ from the target CDFs estimated by the two estimators.
\end{enumerate}
The experiments are run for different logged data sizes and the results are averaged over 1000 trials.
\\\textbf{Results} [Table \ref{table:KSS}, Figure \ref{fig:synth_and_movielens}]: Since SUnO leverages the additive structure in rewards, it estimates the CDF and the tail measures with lower variance and estimation error, while being more sample efficient.
If a single slot action has a zero probability of occurring under the target policy, the importance weight used by UnO for the entire slate goes to zero since it comprises of the product of slot-level density ratios. This is not the case for SUnO which is thus able to utilize a larger effective sample size and be more sample efficient.
Note that with an increase in sample size, both estimators tend to the ground-truth values as they are consistent and unbiased. SUnO has a significantly lower variance in estimation as seen by the error bars in Figure \ref{fig:synth_and_movielens}.

\subsection{Real-World Data}
\label{sec:real_data}
In this section, we first introduce a procedure for converting a recommender system ratings dataset (like MovieLens) to a slate recommendation simulator with additive rewards and proceed to evaluate our method on the simulator. The procedure for constructing the simulator follows.
\subsubsection{Simulation Setup}
\begin{enumerate}[leftmargin=*]
    \item Learn a user-item preference matrix $B$ along with the user context embedding $X$. For $m$ users and $l$ item, $B \in \mathbb{R}^{m \times l}$ and $X \in \{0,1\}^{m}$. We follow the steps outlined in \cite{steck2019embarrassingly} to learn $B$ from rating data. $X$ is a binary vector that encodes user-item interaction history. An alternative method for learning embeddings could be \cite{elahi2020learning}.
    \item To limit our setup to approximately 10k unique users, we trim the set of users to those that have an interaction history of 10 to 15 items.
    \item Compute the \textit{ground truth} preference scores for each user by computing the product of a user's context embedding with the preference matrix (\texttt{$x \cdot B$}).
    \item To make the simulator tractable, we trim the action set by retaining the top 20 preferred actions per user based on each user's ground truth scores ($N=20$).
    \item For a slate action $A$, a ranking metric like nDCG can be set as the slate reward $R$ and works well in practice \cite{vlassis2021control,swaminathan2017off}.
\end{enumerate}
\noindent
\textbf{Experiment}:
First, we set up a slate simulator as described above using the MovieLens-20M dataset to estimate $B$ and $X$. A uniform random factored logging policy is used for creating the offline dataset for evaluating the estimators.
We consider an $\epsilon$-greedy target policy. For each user, it picks the top $K$ preferred actions (one per slot) with probability $1 - N\epsilon$ and a uniform random action from the user's action set with probability $\epsilon$.
Here $N=20, K=5, ~\epsilon=0.01$ and results are averaged over 50 trials. We analyze:
\begin{enumerate}[leftmargin=*]
    \item \textit{Goodness-of-fit of CDF}: We report the average Kolmogorov-Smirnov statistic of the estimated CDFs against the ground truth CDF.
          The ground truth CDF is computed by executing the target policy on the simulator.
    \item \textit{Metrics computed from the CDF}: We compute the mean and 0.5-quantile (median) from the estimated CDF.
\end{enumerate}
\textbf{Results} [Table \ref{table:KSS}, Figure \ref{fig:synth_and_movielens}]: The experiments demonstrate that although only the additive reward condition is met and not Assumption \ref{ass:additive_cdf} ,
SUnO estimates the target CDF with fewer samples (Table \ref{table:KSS}) than UnO.
Our estimator has a significantly lower estimation variance for metrics computed from the CDF, as seen by the error bars for median computation and the mean squared error for the expected value computation in Figure \ref{fig:synth_and_movielens}.
Note that the mean squared error (MSE) captures both the bias and variance in estimation.

\subsection{Non-Additive Reward Structure}
\label{sec:ass_violation}

\begin{table}
    \centering
    \begin{tabular}{|c|c|c|c|}
        \hline
        Sample size & 0.5 $\times 10^5$ & 1 $\times 10^6$ & 5 $\times 10^6$ \\
        \hline
        SUnO        & \textbf{0.253}    & \textbf{0.257}  & \textbf{0.269}  \\
        UnO         & 0.543             & 0.541           & 0.567           \\
        \hline
    \end{tabular}
    \vspace{0.2em}
    \caption{The table reports the mean squared error for mean computation from the estimated CDFs. Even in settings where the slate reward is not additive, our method continues to perform better than the structure-agnostic estimator.}
    \label{fig:obp}
\end{table}
Finally, we evaluate the estimators in a setting where neither the additive reward nor the additive CDF conditions are satisfied.\\
\textbf{Simulator:} We use the Open Bandit Pipeline (OBP) slate bandit simulator \cite{saito2020open} that uses the synthetic slate reward model described in \cite{kiyohara2022doubly}.
It models higher-order interactions among slot actions and thus does not trivially satisfy Assumption \ref{ass:additive_cdf} or the additive slate reward structure.
We use the \textit{cascade additive reward model} defined in OBP for these experiments.\\
\textbf{Experiment}: Similar to the MovieLens experiments, we observe the estimation error for the target mean computed from the estimated CDF. A uniform random logging policy is used to generate the offline dataset and the target policy defined here\footnote{\url{https://github.com/st-tech/zr-obp/blob/master/examples/quickstart/synthetic_slate.ipynb}} is evaluated. We set $K=3, N=10$, and the results are averaged over 10 trials.\\
\textbf{Results} [Table \ref{fig:obp}]:
In this setting, we cannot expect unbiased estimates of the mean from SUnO
since the additive CDF condition is required for unbiased estimation of the target CDF.
Nonetheless,
SUnO continues to perform significantly better in terms of the mean squared error for the mean estimation compared to UnO, which does not make any structural assumptions and is an unbiased estimator in this setting. Here $K$ is set to a relatively small value and a large gap in performance between the two estimators can be expected larger $K$.

\section{Discussion and Conclusion}
We proposed an estimator (SUnO) for off-policy estimation of the target \textit{reward distribution} in slate recommendations modeled as a bandit problem.
Under an additively decomposable conditional CDF, the estimator is unbiased and consistent.
The proposed estimator leads to significant reduction in estimation variance and an increase in effective sample size as compared to the estimator of \citet{chandak2021universal} for the slate setting. We demonstrate estimation gains on synthetic as well as real-world data experiments.
The estimator also readily extends to other reward decompositions that capture the joint effects of slot actions.

In future work, variance reduction techniques can be applied for further variance gains. For instance, %
one may consider a self-normalized version of SUnO that incurs some bias but provides further variance reduction akin to weighted importance sampling \cite{koller2009probabilistic}.
Control variates can also be adopted for variance reduction. For example, by recapitulating the analysis of \cite{vlassis2021control}, one can derive an optimal control variate ($w^*$) for SUnO at each $\nu$.
Further analysis and experiments with such methods are left to future work.

One must note while SUnO provides unbiased estimates for the target CDF and its linear functions under our assumptions, many metrics of interest, such as variance or CVaR,  are not linear functions of the CDF. Another direction for future work would be to extend our results to the unbiased estimation of risk functionals like CVaR \cite{huang2021off2}. %
Finally, it would be interesting to develop techniques for discovering the decomposition structure in rewards so that the appropriate unbiased off-policy estimators can be used.

\section{Acknowledgements}
The research was supported by and partially conducted at Adobe Research. We are also immensely grateful to the four anonymous reviewers who shared their insights and feedback.

\bibliography{references}

\appendix
\newpage
\onecolumn
\appendix

\label{sec:appendix}

\section{Proofs of Main Results}
\label{sec:proofs}

\mainresult*
\begin{proof}[Proof for Theorem \ref{thm:main_result}]
    Under Assumption \ref{ass:additive_cdf} we have,
    $F_R(\nu) = \E[\I\{R \leq \nu\}|A,X] = \sum_{k=1}^K \psi_k(A_k, X, v) ,~\forall \nu$.
    For ease of notation, let $Y_k = \frac{\pi(A^k|X)}{\mu_k(A^k|X)}$.
    Following the proof structure from \cite{vlassis2021control},
    \[
        \begin{split}
            &E[G \cdot \I\{R \leq \nu\}|A,X] = \big(1 - K + \sum_{k=1}^K Y_k \big) \big(\sum_{k=1}^K \psi_k(A^k, X) \big) \\
            &= \sum_{k=1}^K Y_k \psi_k(A^k, X, \nu) + \sum_{k=1}^K (1 - K + \sum_{j\neq k}^K Y_j ) \psi_k(A^k, X, \nu)
        \end{split}
    \]
    Take an expectation over $A \sim \mu(.|X)$.
    It can be seen that the second term equals 0. Due to importance sampling, the first term equals $\E_\pi[\I\{R \leq \nu\} | X]$.
    Taking an expectation over $X \sim d_X(.)$, by the total law of expectation,
    \[\E_\mu[G \cdot \I\{R \leq \nu\}] =  \E_\pi[\I\{R \leq \nu\} ] = F^\pi(\nu)\]
\end{proof}

\unbiasedestimator*
\begin{proof}[Proof for Theorem \ref{thm:unbiased_consistent}]
    It can be shown that $\hat{F}^\pi_n(\nu)$ is an unbiased estimator of $F^\pi(\nu)$ by taking an expectation of $\hat{F}^\pi_n(\nu)$ over datasets $D \sim \mu$,
    where (a) follows from Theorem \ref{thm:main_result}.
    \[\E_{D\sim \mu}\big[\frac{1}{n} \sum_{i=1}^n G_i \I\{R_i \leq \nu\}\big] \stackrel{(a)}{=} \frac{1}{n} \sum_{i=1}^n \E_{\pi}\big[\I\{R_i \leq \nu\}\big] = F^\pi(\nu)\]

    To establish almost sure convergence of the estimator, note that
    each data point in $D_n$ is i.i.d. Additionally, the magnitude of $G_i$ is bounded under the assumption of common support, since each slot-density ratio can at most be $1/\epsilon$. As a result, the variance is $M_i :=  G_i \I\{R_i \leq \nu\}$ is bounded. Thus $M_i$'s are i.i.d with bounded variance. Using Kolmogorov's strong law of large numbers \cite{sen1994large},
    \[\hat{F}^\pi_n(\nu) = \frac{1}{n} \sum_{i=1}^n M_i \stackrel{\text{a.s.}}{\longrightarrow} \E[\frac{1}{n} \sum_{i=1}^n M_i] = F^\pi(\nu) \]
\end{proof}

\mwaygeneralization*
\begin{proof}
    The proof follows the same technique from the proof of Theorem \ref{thm:main_result}, where the appropriate slot density ratios must be matching with the corresponding terms of the reward CDF decomposition. We have,
    \[F_R(\nu) := \sum_{1 \leq k_1 < k_2, \dots < k_m \leq K} \psi_{k_{1:m}}(A^{k_{1:m}}, X, \nu)\]
    For ease of notation, let $Y_{k_i} = \frac{\pi(A^{k_i}|X)}{\mu_{k_i}(A^{k_i}|X)}$. The importance weight $G_m$ is then,
    \[G_m = \sum_{1 \leq k_1 < k_2, \dots < k_m \leq K} \left(\prod_{i=1}^{m}\frac{\pi(A^{k_i}|X)}{\mu_{k_i}(A^{k_i}|X)} - 1\right) + 1 = \sum_{1 \leq k_1 < k_2, \dots < k_m \leq K} \left(\prod_{i=1}^{m}Y_{k_i} - 1\right) + 1 \]
    which has ${K \choose m} + 1$ terms. For ease of notation denote the summation indices $1 \leq k_1 < k_2, \dots < k_m \leq K$ by $k_1:k_m$.
    \[
        \begin{split}
            &E[G_m \cdot \I\{R \leq \nu\}|A,X] = \left( \sum_{k_1:k_m} \left(\prod_{i=1}^{m}Y_{k_i} - 1\right) + 1 \right) \left(\sum_{k_1:k_m} \psi_{k_{1:m}}(A^{k_{1:m}}, X, \nu)\right) \\
            &= \underbrace{\sum_{k_1:k_m} \left(\prod_{i=1}^{m}Y_{k_i}\right) \psi_{k_{1:m}}(A^{k_{1:m}}, X, \nu)}_{\text{matching indices}} + \underbrace{\sum_{k_1:k_m} \left( \sum_{l_1:l_m \neq k_1:k_m} \left(\prod_{i=1}^{m}Y_{k_i} - 1\right) - 1 + 1\right)  \psi_{k_{1:m}}(A^{k_{1:m}}, X, \nu)}_{\text{non-matching indices}}
        \end{split}
    \]
    The matching indices are grouped together in the first term. The second term contains the \textit{remaining} terms of the importance weight $G_m$, namely the indices $l_1:l_m$ that were not used in the first term along with the extra $-1$ from the first term. Now steps similar to the proof of Theorem \ref{thm:main_result} complete the proof. Take an expectation over $A \sim \mu(.|X)$.
    It can be seen that the second term equals 0 since $\E_\mu\left[\left(\prod_{i=1}^{m}Y_{k_i}\right)\right] = 1$. The first term equals $\E_\pi[\I\{R \leq \nu\} | X]$, due to importance sampling. Taking an expectation over $X \sim d_X(.)$, by the total law of expectation,
    \[\E_\mu[G_m \cdot \I\{R \leq \nu\}] =  \E_\pi[\I\{R \leq \nu\} ] = F^\pi(\nu)\]
\end{proof}

\end{document}